\documentclass[pra,reprint,superscriptaddress,longbibliography,floatfix,twocolumn, nofootinbib,notitlepage]{revtex4-1}
\usepackage[utf8]{inputenc}
\usepackage{graphicx}
\usepackage[normalem]{ulem}
\usepackage{diagbox,hyperref}
 \usepackage{inconsolata}
\usepackage{amsmath,amssymb,amsthm,bm,mathtools,amsfonts,mathrsfs,bm, bbm,dsfont}
\usepackage{physics,xcolor}
\usepackage[shortlabels]{enumitem}
\usepackage{thmtools}

\usepackage{caption}
\usepackage{subcaption}

\newcommand{\be}{\begin{equation}}
\newcommand{\ee}{\end{equation}}

\newcommand{\id}{\mathds{1}}

\usepackage{ragged2e}

\renewcommand{\leq}{\leqslant}
\renewcommand{\geq}{\geqslant}

\newtheorem{theorem}{Theorem}
\newtheorem{definition}{Definition}

\newtheorem{proposition}{Proposition}

\begin{document}

\title{Equivalence between simulability of high-dimensional measurements and high-dimensional steering}

\author{Benjamin D.M. Jones}
\thanks{These authors contributed equally to this work.}
\affiliation{Department of Applied Physics, University of Geneva, 1211 Geneva, Switzerland}
\affiliation{H. H. Wills Physics Laboratory, University of Bristol, Bristol, BS8 1TL, UK}\affiliation{Quantum Engineering Centre for Doctoral Training,  University of Bristol, Bristol, BS8 1FD
UK}
\author{Roope Uola}
\thanks{These authors contributed equally to this work.}
\affiliation{Department of Applied Physics, University of Geneva, 1211 Geneva, Switzerland}
\author{Thomas Cope}
\affiliation{Institut für Theoretische Physik, Leibniz Universität Hannover,
30167 Hannover, Germany}
\author{Marie Ioannou}
\affiliation{Department of Applied Physics, University of Geneva, 1211 Geneva, Switzerland}
\author{S\'ebastien Designolle}
\affiliation{Department of Applied Physics, University of Geneva, 1211 Geneva, Switzerland}
\author{Pavel Sekatski}
\affiliation{Department of Applied Physics, University of Geneva, 1211 Geneva, Switzerland}
\author{Nicolas Brunner}
\affiliation{Department of Applied Physics, University of Geneva, 1211 Geneva, Switzerland}

\begin{abstract}
    The effect of quantum steering arises from the judicious combination of an entangled state with a set of incompatible measurements. Recently, it was shown that this form of quantum correlations can be quantified in terms of a dimension, leading to the notion of genuine high-dimensional steering. While this naturally connects to the dimensionality of entanglement (Schmidt number), we show that this effect also directly connects to a notion of dimension for measurement incompatibility. More generally, we present a general connection between the concepts of steering and measurement incompatibility, when quantified in terms of dimension. From this connection, we propose a novel twist on the problem of simulating quantum correlations. Specifically, we show how the correlations of certain high-dimensional entangled states can be exactly recovered using only shared randomness and lower-dimensional entanglement. Finally, we derive criteria for testing the dimension of measurement incompatibility, and discuss the extension of these ideas to quantum channels.
\end{abstract}

\maketitle

\section{Introduction} 

High-dimensional quantum systems feature a number of interesting phenomena, beyond what is possible for qubit systems. For example, the effect of entanglement is known to become increasingly robust to noise when higher dimensions are considered, the robustness becoming even arbitrary large \cite{zhu2021high, ecker2019overcoming}. In turn, the nonlocal correlations obtained from measurements on high-dimensional systems also feature significantly increased robustness. Indeed, these effects offer interesting perspectives for quantum information processing, allowing, e.g., for quantum communications over very noisy channels.

In this work, we consider the effect of genuine high-dimensional steering (GHDS), which has been introduced recently \cite{designolle2021genuine}. The steering scenario can be viewed as the certification of entanglement between an untrusted party (Alice) and a trusted one (Bob). Hence steering is usually referred to as being one-sided device-independent (1-SDI). The key point of GHDS is to certify not only the presence of entanglement, but a minimal dimensionality of entanglement (specifically the Schmidt number) from observed correlations in a 1-SDI scenario. More formally, this approach introduces the notion of $n$-preparable assemblages, i.e., those assemblages being preparable based on any possible entangled state of Schmidt rank at most $n$; 1-preparable assemblages being then simply those assemblages that cannot lead to steering. Next, one can construct a steering inequality for $n$-preparable assemblages, the violation of which implies the presence of genuine $n+1$-dimensional steering. This was demonstrated in a quantum optics experiment (based on photon-pairs entangled in orbital angular momentum) reporting the 1-SDI certification of 14-dimensional entanglement.

A natural question at this point is to understand what are the resources required in terms of measurements for demonstrating GHDS. Indeed, the effect of steering uses not only an entangled state as a resource, but also a well-chosen set of local measurements for Alice. The latter must be incompatible (in the sense of being non-jointly measurable), but it turns out that steering has a direct connection to measurement incompatibility. 

The present work explores this question, and establishes a general connection between GHDS and the notion of $n$-simulability of high-dimensional measurements which has been recently introduced in Ref.~\cite{ioannou2022simulability}. This notion generalises the concept of joint measurability and provides a quantification of measurement incompatibility in terms of a dimension. The connection we uncover generalises the well-known relations between quantum steering and joint measurability. Moreover, we also extend the connection to quantum channels, in particular the characterisation of their high-dimensional properties. These general tripartite connections between high-dimensional steering, measurements and channels, allow for results of one area to be directly translated in others, which we illustrate with several examples. 

\begin{figure*}[t]
    \centering
    \includegraphics[width=0.99\textwidth]{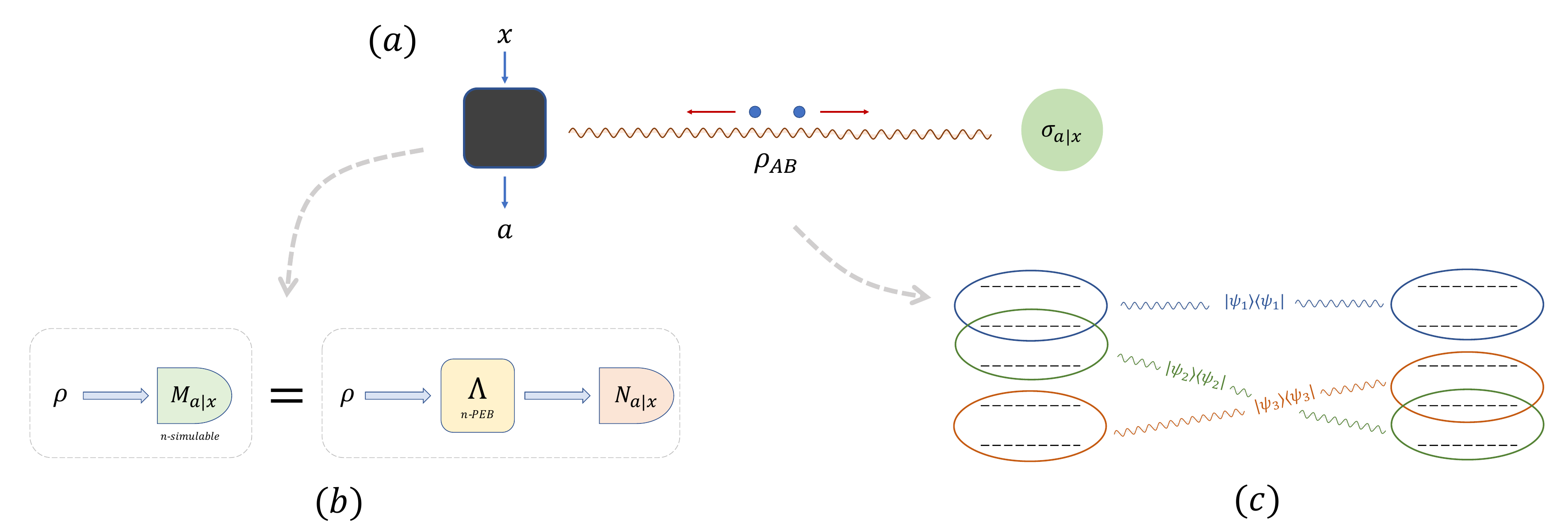}
    \caption{Concepts and connections that appear in this work. (a) Quantum Steering scenario. (b) A set of measurements is $n$-simulable if they can be replaced by an $n$-partially entanglement breaking channel ($n$-PEB) followed by some measurements. (c) Illustration of the Schmidt number (SN) of a bipartite state: the state of two $5$ level systems is a combination of states with only qubit entanglement, hence the overall state has SN at most $2$. }
    \label{fig:fig1}
\end{figure*}

\section{Summary of results}

We start by identifying the resources for GHDS. In particular, we show that an assemblage is $n$-preparable if it can be prepared via an entangled state of Schmidt number $n$ or if the set of Alice's local measurements are $n$-preparable. Hence the observation of genuine $n+1$-dimensional steering implies the presence of both (i) an entangled state of Schmidt number (at least) $n+1$, and (ii) a set of measurements for Alice that is not $n$-simulable. In this sense, GHDS provides a dimensional certification of both the entangled state and the local measurements. Moreover, we show that there is a one-to-one mapping between any $n$-preparable assemblage and a set of measurements that is $n$-simulable, generalising the existing connection between steering and joint measurability (corresponding here to the case $n=1$).

This connection allows us to import results from one area to the other. For example, we can construct optimal models for simulating the correlations of high $d$-dimensional entangled states (so-called isotropic states) based on lower $n$-dimensional entanglement (and classical shared randomness). This simulation models hold for all possible local measurements on Alice's and Bob's side. In this sense, these models can be considered as a generalisation of the well-known local hidden state model of Werner, where classical shared randomness is augmented with low-dimensional entanglement. Moreover, we can translate steering inequalities for GHDS into criteria for testing non $n$-simulability of measurements. 

Finally, we obtain a dimensional characterisation of quantum channels via channel-state duality. In particular, we consider channels that map the set of all measurements to $n$-simulable ones, and describe the corresponding Choi states.

We conclude with a number of open questions.

\section{Basic concepts and questions}

A central notion for us will be \textit{quantum steering}, see e.g.  \cite{cavalcanti2016quantum, uola2020quantum} for recent reviews.  Here, one party (Alice) performs local measurements $\{M_{a|x}\}$ on a state $\rho_{AB}$, a unit-trace positive semi-definite matrix acting on a finite-dimensional Hilbert space, that she shares with another distant party (Bob). The measurements are collections of matrices for which $M_{a|x} \geq 0$ $\forall a,x$ and $\sum_a M_{a|x} = \id $ for each $x$. Here $x$ indexes the measurement and $a$ indexes the outcome. For each $x$, the collection $\{M_{a|x}\}_a$ is called a positive operator-valued measure (POVM for short). By performing her measurements, Alice remotely prepares the system of Bob in different possible states denoted by 
\be
\sigma_{a|x}:=\text{Tr}_A \bigg (M_{a|x}\otimes \id ~ [\rho_{AB}] \bigg), \label{eq:steeringassem}
\ee
usually termed an \textit{assemblage}, see Figure (\ref{fig:fig1}a). Such an assemblage demonstrates quantum steering when it does not admit a \textit{local hidden state} (LHS) model, i.e. a decomposition of the form 
\begin{equation} \label{LHS}
\sigma_{a|x} = p(a|x) \sum_\lambda ~ p(\lambda|a,x) ~ \sigma_\lambda \,,
\end{equation}
where $p(a|x)$ is a normalisation factor and $p(\lambda|x,\lambda)\sigma_\lambda$ is an ensemble of states whose priors get updated upon Bob asking Alice to perform the measurement $x$ and her reporting back the outcome $a$. 

Steering represents a form of quantum correlations that is intermediate between entanglement and Bell nonlocality \cite{wiseman2007steering,quintino2015inequivalence}. Specifically, there exist entangled states that cannot lead to steering, and there exist some steerable states that cannot lead to Bell inequality violation (nonlocality). Also, the steering scenario is commonly referred to as \textit{one-sided device-independent} (1-SDI), as Alice's device is untrusted but Bob's device is fully characterised. Since steering requires the presence of entanglement---separable states always admitting an LHS model---it also represents a 1-SDI method for certifying entanglement. Moreover, steering is an asymmetric phenomenon, as there exist states $\rho_{AB}$ for which steering is only possible in one direction (e.g., from $A$ to $B$) \cite{bowles2014one}.

One can take the concept of quantum steering a step further in terms of bipartite entanglement detection. Instead of only certifying the presence of entanglement, it is possible to use steering to characterise the dimensionality entanglement dimensionality, as quantified via the Schmidt number \cite{terhal2000schmidt}. For a pure state $\ket{\psi}$, this corresponds to the \textit{Schmidt rank} (SR), i.e., the minimum number of terms needed to express $\ket{\psi}$ as a linear combination of product states. The \textit{Schmidt number} (SN) \cite{terhal2000schmidt} is a generalisation to mixed states, formally defined as 
\begin{align}
    \text{SN}(\rho) := \underset{\{ p_k, ~\ket{\psi_k} \}}{\min} \max_k \quad&\text{SR}(\ket{\psi_k}) \\\nonumber
    \quad \text{s.t} \quad &\rho = \sum_k p_k \ketbra{\psi_k}{\psi_k}.
\end{align}
The Schmidt number thus quantifies the entanglement dimensionality, in that it tells the minimum number of degrees of freedom that one needs to be able to entangle in order to produce the state, see Figure (\ref{fig:fig1}c). As an example, witnessing a Schmidt number of three implies that qubit entanglement, even when mixed between different subspaces, is not enough to produce the state.

In \cite{designolle2021genuine}, the concept of \textit{genuine high-dimensional steering} (GHDS) was introduced, where one asks whether a given assemblage $\sigma_{a|x}$ can be produced using a bipartite state $\rho_{AB}$ of Schmidt number at most $n$, in which case we term the assemblage \textit{$n$-preparable}. In this framework, an assemblage is LHS if and only if it is $1$-preparable, as any LHS assemblage can be prepared using only separable states \cite{Kogias15,Moroder16}. Hence if an assemblage is not $n$-preparable, this guarantees that the underlying state $\rho_{AB}$ is of Schmidt number at least $n+1$. This represents a 1-SDI certification of entanglement dimensionality, illustrated in a recent quantum optics experiment certifying up to 14-dimensional entanglement \cite{designolle2021genuine}.

So far, the focus of GHDS is on the dimensionality of the shared entangled state. There is however another resource that is crucial for observing quantum steering, namely the set of measurements performed by Alice, which must be incompatible. More generally, there exist in fact a deep connection between measurement incompatibility (in the sense of being not jointly measurable) and quantum steering \cite{quintino14,uola14,uola15}. In particular, this implies that any set of incompatible measurements for Alice can be combined with an appropriate state $\rho_{AB}$ for demonstrating steering. 

This naturally raises the question of what are the necessary resources in terms of measurements for demonstrating GHDS. Intuitively, the latter should also require a minimal ``dimensionality'' for the set of measurements. Below we will make this intuition precise, by using the concept of $n$-simulability of a set of measurements. More generally, we will establish a deep connection between GHDS (more precisely the notion of $n$-preparability of an assemblage) and $n$-simulability of set of measurements. This generalises the previously known connection between steering and measurement incompatibility. 

A set of measurements $\{M_{a|x}\}$, defined on a Hilbert space of dimension $d$, is said to be $n$-simulable when the statistics of this set of measurements on any possible quantum state can be exactly recovered using a form of compression of quantum information to a lower $n$-dimensional space. Consider for example Alice (on the moon), sending an arbitrary state $\rho$ to a distant party Bob (on earth), who will perform a set of POVMs $\{M_{a|x}\}$ (see Fig. 1). Which POVM Bob performs depends on some input $x$. The expected (target) data is given by $p(a|x,\rho) = \Tr(M_{a|x} \rho)$. As resource, we consider here the dimensionality of the quantum channel between Alice and Bob, while a classical channel is always available for free. The goal is then to compress as much as possible the initial state of Alice, in order to use a quantum channel with minimal dimension, while still recovering exactly the target data. More formally, we demand that
\begin{equation} \label{n-simulable}
    M_{a|x} = \sum_{\lambda}  \Lambda_{\lambda}^*( N_{a|x,\lambda})
\end{equation}
where $\Lambda = \{\Lambda_{\lambda}\}_{\lambda}$ denotes the instrument (compressing from dimension $d$ to $n$), with classical output $\lambda$, and $N_{a|x,\lambda}$ is a set of $n$-dimensional POVMs performed by Bob upon receiving the input $x$ and the classical information $\lambda$ communicated by Alice. Here $\Lambda_\lambda^*$ refers to the Heisenberg picture of $\Lambda_\lambda$. A set of measurements is termed $n$\textit{-simulable} whenever a decomposition of the form \eqref{n-simulable} can be found.

An important case is $1$-simulability, i.e., when the full quantum information can be compressed to purely classical one. This is possible if and only if the set of POVMs is jointly measurable, i.e., 
$M_{a|x} = \sum_\lambda ~ p(a|x,\lambda) ~ G_\lambda$, for some probability distribution $p(a|x,\lambda)$ and a ``parent'' measurement $G_\lambda$, see \cite{JMinvitation,JMreview} for reviews on the topic. A set of POVMs that is not jointly measurable (hence called \textit{incompatible}), can nevertheless be $n$-simulable, for some $n$ with $2 \leq n \leq d$. 

The notion of $n$-simulability can also be connected to quantum channels, and their dimensional properties. This requires the use of a property of channels that is analogous Schmidt number of bipartite states. Namely, one says that a channel $\Lambda$ is \textit{$n$-partially entanglement breaking} ($n$-PEB) if $\text{SN}(\Lambda \otimes \id \rho) \leq n$ for all $\rho$ \cite{chruscinski2006partially}. Clearly, for the case $n=1$ this concept corresponds to entanglement breaking channels.

This leads to an alternative formulation of \textit{$n$-simulability}, which we will primarily use in the following sections: a measurement assemblage $M_{a|x}$ is \textit{$n$-simulable} if and only if there exists an $n$-PEB quantum channel $\Lambda$ and a measurement assemblage $N_{a|x}$ such that $M_{a|x} = \Lambda^* \big ( N_{a|x} \big )$.

In the rest of the paper, we will first establish precisely the connection between $n$-preparability and $n$-simulability. In turn, we will discuss simulation models for for the correlations of entangled states (of Schmidt number $d$) using as resource lower-dimensional entanglement (of Schmidt number $n<d$), considering all possible measurements. This idea can be seen as a generalisation of the problem of simulating the correlations of entangled state via local hidden variables (or local hidden state models). Finally, in the last section of the paper, we will also extend the connection to quantum channels and their characterisation in terms of dimension. This will provide a full tripartite connection, for characterising dimension in steering assemblages, incompatibility of sets of measurements, and quantum channels.

\section{High-dimensional steering and simulability of measurements} 
In this section, we present in detail the structural connection between $n$-preparability of steering assemblages and $n$-simulability of sets of measurements. 

We start with a first result clearly identifying the resource for GHDS. More precisely, the following Theorem implies that observing GHDS, i.e., an assemblage which is not $n$-preparable, implies that (i) the shared entangled state $\rho_{AB}$ has at least Schmidt number $n+1$, and (ii) the set of measurements $\{M_{a|x}\}$ performed by Alice is not $n$-simulable. In other words, one really needs both high-dimensional entanglement and high-dimensional measurement incompatibility to witness genuine high-dimensional steering.

More formally we can prove the following.

\begin{theorem} \label{theorem:compat->prep}
If $M_{a|x}$ is $n$-simulable or $\rho_{AB}$ has Schmidt number at most $n$, then the assemblage
\be
\sigma_{a|x}:=\textup{Tr}_A \bigg (M_{a|x}\otimes \id ~ [\rho_{AB}] \bigg)
\ee
is $n$-preparable.
\end{theorem}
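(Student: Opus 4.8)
The plan is to treat the two sufficient conditions separately. The case where $\rho_{AB}$ has Schmidt number at most $n$ is immediate: the assemblage $\sigma_{a|x}$ is by construction obtained from a state of Schmidt number $\leq n$ together with Alice's POVMs $\{M_{a|x}\}$, which is exactly the definition of $n$-preparability, so this branch requires no work.

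The substantive case is when $\{M_{a|x}\}$ is $n$-simulable. Here I would invoke the channel reformulation stated above: $n$-simulability is equivalent to the existence of an $n$-PEB channel $\Lambda$ and a POVM assemblage $\{N_{a|x}\}$ with $M_{a|x} = \Lambda^*(N_{a|x})$. Substituting into the definition of the assemblage gives
\begin{equation}
\sigma_{a|x} = \textup{Tr}_A\big( \Lambda^*(N_{a|x}) \otimes \id ~ [\rho_{AB}] \big),
\end{equation}
and the key step is to transfer the channel from Alice's measurement onto the shared state using the Heisenberg--Schr\"odinger duality $\textup{Tr}[\Lambda^*(Y)\,X] = \textup{Tr}[Y\,\Lambda(X)]$ applied on system $A$, leaving the identity untouched on $B$. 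This yields
\begin{equation}
\sigma_{a|x} = \textup{Tr}_A\big( N_{a|x} \otimes \id ~ [(\Lambda \otimes \id)(\rho_{AB})] \big).
\end{equation}

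Setting $\rho'_{AB} := (\Lambda \otimes \id)(\rho_{AB})$, which is a legitimate state since $\Lambda \otimes \id$ is CPTP, the $n$-PEB property of $\Lambda$ guarantees $\textup{SN}(\rho'_{AB}) \leq n$ by definition. Thus $\sigma_{a|x}$ is produced by performing the POVMs $\{N_{a|x}\}$ on a state of Schmidt number at most $n$, which is precisely $n$-preparability, completing the argument. The only delicate point I anticipate is the duality identity at the level of the partial trace, i.e. establishing $\textup{Tr}_A[(\Lambda^* \otimes \id)(X_{AB})\,\rho_{AB}] = \textup{Tr}_A[X_{AB}\,(\Lambda \otimes \id)(\rho_{AB})]$ as an operator equality on $B$; this is routine but must be checked by pairing both sides against an arbitrary observable on $B$ to reduce it to the scalar adjoint relation.
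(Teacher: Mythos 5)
Your proposal is correct and follows essentially the same route as the paper's proof: the Schmidt-number case is dispatched by definition, and the $n$-simulability case uses the $n$-PEB channel reformulation together with the adjoint relation to push $\Lambda$ from the measurements onto the state, whose image then has Schmidt number at most $n$. Your extra care about verifying the duality at the level of the partial trace (by pairing against an arbitrary observable on $B$) is a sound refinement of the step the paper states simply as ``by the definition of the dual.''
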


\begin{proof}
If $\rho_{AB}$ has SN at most $n$, this simply follows from the definition of $n$-preparability. Now suppose that $M_{a|x}$ is $n$-simulable. Then there exists a $n$-PEB channel $\Lambda$ and measurements $N_{a|x}$ such that $M_{a|x} = \Lambda^* (N_{a|x})$. By the definition of the dual, we can hence write
\begin{align}
\sigma_{a|x} &= \Tr_A \Big ( \Lambda^* (N_{a|x}) \otimes \id [ \rho_{AB} ] \bigg ) \\
& = \Tr_A \Big ( \big(N_{a|x} \otimes \id\big) \big( \Lambda \otimes \id  \big) [\rho_{AB}] \Big )
\end{align}
and as $\Lambda$ is $n$-PEB, then $\Lambda \otimes \id [\rho_{AB}]$ has SN at most $n$, so $\sigma_{a|x}$ is $n$-preparable.
\end{proof}

It is worth noting that, for the simplest case of $n=1$, the above Theorem corresponds to the well-known fact that an assemblage constructed from a separable state or via a jointly measurable set of POVMs always admits a LHS model. In other words, the observation steering proves the presence of an entangled state and an incompatible set of POVMs for Alice.

Our next result establishes a general equivalence between any $n$-preparable assemblage and a set of POVMs that is $n$-simulable, and vice versa. The main idea is that a set of quantum measurements $M_{a|x}$ and a steering assemblage $\sigma_{a|x}$ are very similar types of mathematical objects: both are composed of positive semi-definite matrices, and $\sum_a M_{a|x}=\id\quad \forall x$ whereas $\sum_a \sigma_{a|x}$ will be equal to some fixed state $\rho_B = \text{Tr}_A (\rho_{AB})$ for all $x$. A direct connection can be established, namely that $\sigma_{a|x}$ is LHS if and only if $\rho_B^{-\frac{1}{2}} \sigma_{a|x} \rho_B^{-\frac{1}{2}}$ is jointly measurable (when interpreted as a set of measurements) \cite{uola15}. The Theorem below can be considered a generalisation of this result, in the sense that the proof of Ref. \cite{uola15} corresponds to the case $n=1$.

\begin{theorem} \label{thm: n-sim = n-prep}
Consider a steering assemblage $\sigma_{a|x}$ and measurements $M_{a|x}$ such that $M_{a|x}=\rho_B^{-\frac{1}{2}} ~ \sigma_{a|x} ~ \rho_B^{-\frac{1}{2}}$, where $\rho_{B} := \sum_a \sigma_{a|x}$ is of full rank. Then $M_{a|x}$ is $n$-simulable if and only if $\sigma_{a|x}$ is $n$-preparable.
\end{theorem}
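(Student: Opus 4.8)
The plan is to prove both implications by exhibiting explicit, mutually inverse constructions, exploiting the identity $\sigma_{a|x} = \rho_B^{1/2} M_{a|x}\rho_B^{1/2}$ (valid because $\rho_B$ is full rank) together with the elementary dictionary between the Schmidt rank of a bipartite vector and the rank of an associated operator: writing $\ket\psi = (K\tot\id)\ket\Omega$ with $\ket\Omega=\sum_i\ket{i}\ket{i}$ the unnormalised maximally entangled vector, one has $\textup{SR}(\ket\psi)=\textup{rank}(K)$. The whole argument amounts to tracking this rank through the two operations---conjugation by $\rho_B^{\pm1/2}$ and a partial trace---that relate the two scenarios. I will work with the instrument formulation of $n$-simulability, $M_{a|x}=\sum_\lambda\Lambda_\lambda^*(N_{a|x,\lambda})$ with each $\Lambda_\lambda$ compressing $\mathbb C^d\to\mathbb C^n$, since this is the form the algebra produces directly; the $n$-PEB-channel formulation is equivalent by the statement recalled in the excerpt.

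For ``$n$-preparable $\Rightarrow$ $n$-simulable'', I would start from a witnessing state $\rho_{AB}=\sum_\lambda p_\lambda\ketbra{\psi_\lambda}{\psi_\lambda}$ with every $\textup{SR}(\ket{\psi_\lambda})\leq n$ and Alice POVMs $\{M'_{a|x}\}$. Writing $\ket{\psi_\lambda}=(K_\lambda\tot\id)\ket\Omega$ with $\textup{rank}(K_\lambda)\leq n$, a short computation gives $\sigma_{a|x}=\sum_\lambda p_\lambda\,\overline{K_\lambda}^\dagger\,(M'_{a|x})^{T}\,\overline{K_\lambda}$. Conjugating by $\rho_B^{-1/2}$ and absorbing the weights into $R_\lambda:=\sqrt{p_\lambda}\,\overline{K_\lambda}\,\rho_B^{-1/2}$ yields $M_{a|x}=\sum_\lambda R_\lambda^\dagger(M'_{a|x})^{T}R_\lambda$ with $\sum_\lambda R_\lambda^\dagger R_\lambda=\id$, the normalisation following from $\sum_a\sigma_{a|x}=\rho_B$. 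Since $\textup{rank}(R_\lambda)\leq n$, a singular-value factorisation $R_\lambda=W_\lambda S_\lambda$ with $W_\lambda\colon\mathbb C^n\to\mathbb C^{d_A}$ an isometry and $S_\lambda\colon\mathbb C^d\to\mathbb C^n$ exhibits $M_{a|x}=\sum_\lambda S_\lambda^\dagger\,N_{a|x,\lambda}\,S_\lambda$ with $N_{a|x,\lambda}:=W_\lambda^\dagger(M'_{a|x})^{T}W_\lambda$ a POVM on $\mathbb C^n$ and $\Lambda_\lambda^*(\cdot)=S_\lambda^\dagger(\cdot)S_\lambda$ the dual of an instrument compressing to dimension $n$, i.e.\ exactly an $n$-simulable model.

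For the converse I would run the same computation backwards, the one new ingredient being that the simulating POVMs $N_{a|x,\lambda}$ now carry a $\lambda$-dependence that must be folded into a single state and a single Alice measurement. Starting from $M_{a|x}=\sum_\lambda S_\lambda^\dagger N_{a|x,\lambda}S_\lambda$ and setting $T_\lambda:=S_\lambda\rho_B^{1/2}\colon\mathbb C^d\to\mathbb C^n$ gives $\sigma_{a|x}=\sum_\lambda T_\lambda^\dagger N_{a|x,\lambda}T_\lambda$ with $\sum_\lambda T_\lambda^\dagger T_\lambda=\rho_B$. I then adjoin a classical flag register on Alice's side and define $\rho_{AB}=\sum_\lambda\ketbra{\lambda}{\lambda}\tot\ketbra{\chi_\lambda}{\chi_\lambda}$ with $\ket{\chi_\lambda}=(\id_n\tot T_\lambda^\dagger)\ket{\Omega_n}\in\mathbb C^n\tot\mathbb C^d$, together with the flag-controlled Alice POVM $M'_{a|x}=\sum_\lambda\ketbra{\lambda}{\lambda}\tot\overline{N_{a|x,\lambda}}$. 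A direct evaluation of $\textup{Tr}_A[(M'_{a|x}\tot\id)\rho_{AB}]$, using the identity $\textup{Tr}_{A_2}[(\overline N\tot\id)\ketbra{\chi}{\chi}]=T^\dagger N T$, recovers $\sigma_{a|x}$, while unit trace of $\rho_{AB}$ follows again from $\sum_\lambda T_\lambda^\dagger T_\lambda=\rho_B$.

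The step I expect to be the crux is controlling the Schmidt number in this last construction: one must check that the flag register does not inflate the entanglement dimension. This holds because the flag sits entirely on Alice's side, so $\textup{SR}(\ket{\psi_\lambda})=\textup{SR}(\ket{\chi_\lambda})\leq n$ (as $\ket{\chi_\lambda}$ lives on $\mathbb C^n\tot\mathbb C^d$), whence $\rho_{AB}$, a mixture of such vectors, has $\textup{SN}(\rho_{AB})\leq n$ and $\sigma_{a|x}$ is $n$-preparable. The remaining delicate points are purely bookkeeping: the transpose and complex conjugation introduced by the $(K\tot\id)\ket\Omega$ representation, and the systematic use of $\rho_B$ being full rank, which makes $\rho_B^{\pm1/2}$ well defined and renders the POVM normalisation $\sum_a M_{a|x}=\id$ equivalent to the assemblage constraint $\sum_a\sigma_{a|x}=\rho_B$.
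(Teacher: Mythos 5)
Your proof is correct, and it takes a genuinely different route from the paper's. The paper proves the equivalence abstractly: it runs a chain of equivalences through the generalised channel--state duality of Kiukas et al., reducing the statement to the fact that $\mathrm{SN}(\rho_{AB})\leq n$ iff the dual channel $\Lambda_{\rho_{AB}}$ is $n$-PEB (proved separately in Appendix~A via Kraus-operator ranks), and then needs an extra lemma that $n$-simulability is invariant under transposition, since the duality naturally produces $M_{a|x}^T$ rather than $M_{a|x}$. You instead make both directions fully constructive at the level of Kraus/Schmidt operators: in one direction you extract the compressing instrument $\Lambda_\lambda(\cdot)=S_\lambda(\cdot)S_\lambda^\dagger$ directly from the operators $K_\lambda$ of a Schmidt-number-$n$ decomposition (your computation essentially reproves the paper's Appendix~A proposition inline, including the normalisation $\sum_\lambda R_\lambda^\dagger R_\lambda=\id$ coming from $\sum_a\sigma_{a|x}=\rho_B$), and in the other direction you build the witnessing state explicitly, handling the $\lambda$-dependence of the simulating POVMs with a classical flag register on Alice's side and correctly noting that the flag does not increase the Schmidt rank across the $A{:}B$ cut --- a point the paper's duality formalism absorbs silently. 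The trade-off is transparency versus brevity: your argument is self-contained and makes the physical content of the correspondence visible (where the compression lives, why the flag is free), at the cost of some bookkeeping with conjugates and transposes; the paper's argument is shorter but leans on two external ingredients (the duality of Ref.~\cite{kiukas2017continuous} and the rank characterisation of $n$-PEB channels). Two small points worth making explicit in a final write-up: (i) in the converse direction, reducing a general instrument $\{\Lambda_\lambda\}$ to single-Kraus-operator elements $S_\lambda(\cdot)S_\lambda^\dagger$ is without loss of generality by refining the classical flag over the Kraus index; (ii) in the forward direction the isometry $W_\lambda\colon\mathbb{C}^n\to\mathbb{C}^{d_A}$ implicitly assumes $n\leq d_A$, which is harmless since the Schmidt ranks are bounded by $\min(d_A,d_B)$ anyway but deserves a word.
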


\begin{proof}  Let $N_{a|x}$ be a measurement assemblage and $\rho_{AB}$ be a state such that $\Tr_A(\rho_{AB}) = \rho_B$. Let $(\cdot)^T$ denote the transpose with respect to an eigenbasis of $\rho_B$. We then have the following equivalences
\begin{align}
\sigma_{a|x} &= \Tr_A (N_{a|x} \otimes \id~\rho_{AB} ) \\
\iff M_{a|x}&= ~ \rho_B^{ -\frac{1}{2}} ~\Tr_A (N_{a|x} \otimes \id~\rho_{AB} ) ~ \rho_B^{ -\frac{1}{2}} \\ 
\iff M_{a|x}^T&= ~ \rho_B^{ -\frac{1}{2}} ~\Tr_A (N_{a|x} \otimes \id~\rho_{AB} )^T ~ \rho_B^{ -\frac{1}{2}} \\ 
\iff M_{a|x}^T &= \Lambda_{\rho_{AB}}^* \Big ( N_{a|x} \Big ),
\end{align}
where in the third line we used the fact that $(\rho_B^{-\frac{1}{2}})^T=\rho_B^{-\frac{1}{2}}$, as the transpose is taken in an eigenbasis of $\rho_B$, and in the last line we have invoked the form of channel-state duality from Ref.~\cite{kiukas2017continuous}. 

Now observe that the existence of a state $\rho_{AB}$ in the above with Schmidt number at most $n$ is equivalent to $\sigma_{a|x}$ being $n$-preparable. We can also see that there exists $\rho_{AB}$ with SN$(\rho_{AB})\leq n$ if and only if $M_{a|x}^T$ is $n$-simulable, as such state corresponds to $\Lambda_{\rho_{AB}}$ being $n$-PEB, see Appendix A for details. To finalize the proof we must show that $M_{a|x}$ is $n$-simulable if and only if $M_{a|x}^T$ is $n$-simulable. This can be seen as follows. First note that $M_{a|x}^T$ defines a valid collection of measurements. Suppose that $M_{a|x} = \Lambda^*(N_{a|x})$ with $\Lambda$ $n$-PEB and $N_{a|x}$ arbitrary measurements. Then letting $\mathcal{T}$ denote the transpose map, we have that $M_{a|x}^T = (\mathcal{T} \circ \Lambda^*)(N_{a|x}) = ( \Lambda \circ \mathcal{T}^*)^*(N_{a|x})$. As $\Lambda$ is $n$-PEB, $\Lambda \circ \mathcal{T}^*$ is also $n$-PEB. Hence $M_{a|x}^T$ is $n$-simulable. The converse direction follows from $(M_{a|x}^T)^T = M_{a|x}$.
\end{proof}

As a technical remark, note that as for any $a$ and $x$ the support of $\sigma_{a|x}$ is contained within the support of $\rho_B = \sum_a \sigma_{a|x}$ (this follows as $\sigma_{a|x}$ are all positive semi-definite), we can still invoke the above theorem in the case where $\rho_B$ is not full rank, by restricting $\sigma_{a|x}$ to the support of $\rho_B$.

Theorem \ref{thm: n-sim = n-prep} also allows to prove the following result, which complements Theorem \ref{theorem:compat->prep}. This shows that for any set of POVMs that is not $n$-simulable, one can always find an entangled state such that the resulting assemblage is not $n$-preparable. Again, this generalizes some previous results stating that any incompatible set of POVMs can lead to steering \cite{quintino14,uola14}, which corresponds to the case $n=1$ of the proposition below.

\begin{proposition}
If $M_{a|x}$ is not $n$-simulable, then the assemblage
\be
\sigma_{a|x}:=\textup{Tr}_A \bigg (M_{a|x}\otimes \id ~ \ketbra{\Phi^+} \bigg)
\ee
is not $n$-preparable, where $\ket{\Phi^+}=\frac{1}{\sqrt{d}}\sum_i \ket{ii}$.
\end{proposition}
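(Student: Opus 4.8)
The plan is to apply Theorem \ref{thm: n-sim = n-prep} directly, exploiting the fact that the maximally entangled state has a full-rank reduced state on Bob's side. First I would compute $\rho_B = \Tr_A(\ketbra{\Phi^+})$. A one-line calculation gives $\rho_B = \id/d$, which is of full rank, so the hypotheses of Theorem \ref{thm: n-sim = n-prep} are met and I may use it without the caveat about restricting to the support of $\rho_B$.

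Next I would identify the measurement assemblage that Theorem \ref{thm: n-sim = n-prep} associates to $\sigma_{a|x}$, namely $\rho_B^{-1/2}\,\sigma_{a|x}\,\rho_B^{-1/2}$. Using the standard identity $\Tr_A\!\big((M \otimes \id)\ketbra{\Phi^+}\big) = \tfrac{1}{d}M^T$, the assemblage here is $\sigma_{a|x} = \tfrac{1}{d}M_{a|x}^T$, with the transpose taken in the computational basis. Since $\rho_B = \id/d$ is proportional to the identity, the computational basis is an eigenbasis of $\rho_B$, so this transpose is consistent with the one appearing in Theorem \ref{thm: n-sim = n-prep}. As $\rho_B^{-1/2} = \sqrt{d}\,\id$, the induced measurement is exactly $\rho_B^{-1/2}\,\sigma_{a|x}\,\rho_B^{-1/2} = M_{a|x}^T$.

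Theorem \ref{thm: n-sim = n-prep} then states that $\sigma_{a|x}$ is $n$-preparable if and only if $M_{a|x}^T$ is $n$-simulable. Combining this with the transpose-invariance of $n$-simulability established inside the proof of that theorem (i.e.\ $M_{a|x}^T$ is $n$-simulable if and only if $M_{a|x}$ is $n$-simulable), I conclude that $\sigma_{a|x}$ is $n$-preparable if and only if $M_{a|x}$ is $n$-simulable. Taking the contrapositive yields the claim: if $M_{a|x}$ is not $n$-simulable, then $\sigma_{a|x}$ is not $n$-preparable.

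The argument is essentially an unpacking of Theorem \ref{thm: n-sim = n-prep} for the specific choice $\rho_{AB} = \ketbra{\Phi^+}$, so there is no genuine obstacle. The only point demanding care is recognising that the measurement induced by the assemblage is the transpose $M_{a|x}^T$ rather than $M_{a|x}$ itself; this turns out to be harmless precisely because $n$-simulability is preserved under transposition, a fact already secured within the proof of Theorem \ref{thm: n-sim = n-prep}.
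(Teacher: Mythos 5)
Your proposal is correct and follows essentially the same route as the paper's own proof: compute $\sigma_{a|x}=\tfrac{1}{d}M_{a|x}^T$, invoke Theorem~\ref{thm: n-sim = n-prep} with $\rho_B=\id/d$, and use the transpose-invariance of $n$-simulability established inside that theorem's proof. Your write-up is somewhat more explicit about the full-rank condition and the choice of eigenbasis for the transpose, but the argument is the same.
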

\begin{proof}
We have that
\begin{align}
\sigma_{a|x}=\text{Tr}_A \bigg (M_{a|x}\otimes \id ~ \ketbra{\Phi^+} \bigg) = \frac{1}{d}~ M_{a|x}^T.
\end{align}
By the proof of Theorem~\ref{thm: n-sim = n-prep}, if $M_{a|x}$ is not $n$-simulable, then $M_{a|x}^T$ is not $n$-simulable. Then invoking Theorem \ref{thm: n-sim = n-prep} with $\rho_B = \frac{\id}{d}$, we have that that $\sigma_{a|x}$ is not $n$-preparable.
\end{proof}

In the final part of this section, we show that the trade-off between high-dimensional entanglement, high-dimensional measurement incompatibility, and high-dimensional steering can be made quantitative. For this, we use a specific resource quantifiers known as the convex weight \cite{Steeringweight}. Consider for example the quantification of entanglement via the weight. For any entangled state $\rho$, we can measure its entanglement through its weight, given by the following quantity
\begin{equation}
\label{eq: WeightDef}
\begin{split}
    \mathcal{W}_F(\rho) := &\min \lambda\\
    &\mathrm{s.t.}\  D= (1-\lambda) \rho_{sep} + \lambda\sigma,
\end{split}
\end{equation}
where the minimisation runs over any state $\rho_{sep}$ that is separable, and $\sigma$ an arbitrary state. As expected, $\mathcal{W}_F (\rho) =0$ when $\rho$ is separable. More generally, this quantifier can apply to objects such as states, measurements or steering assemblages, with respective free sets $E_n$: the set of states with Schmidt number at most $n$, $S_n$: the set of of $n$-simulable measurements assemblages, and $P_n$: the set of $n$-preparable steering assemblages. We can now state our next result, which quantitatively illustrates the necessity of high-dimensional measurement incompatibility and entanglement for GHDS:

 \begin{restatable}{theorem}{weighttheorem} \label{thm:weight} Given an assemblage $\sigma_{a|x}=\textup{Tr}_A  (M_{a|x}\otimes \id ~ [\rho_{AB}])$, we have the following inequality:
\[
\mathcal{W}_{P_n}(\sigma_{a|x}) \leq \mathcal{W}_{S_n}(M_{a|x}) \mathcal{W}_{E_n}(\rho_{AB}).
\]
For the case $n=1$ we get a quantitative connection among steering, measurement incompatibility and entanglement.
\end{restatable}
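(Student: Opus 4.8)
The plan is to construct an explicit feasible weight decomposition of $\sigma_{a|x}$ out of the optimal decompositions of its two ingredients, and then read off the bound from the coefficients of the cross terms. Write $w_M := \mathcal{W}_{S_n}(M_{a|x})$ and $w_\rho := \mathcal{W}_{E_n}(\rho_{AB})$, and fix optimal decompositions
\begin{align*}
M_{a|x} &= (1-w_M)\, M^{\mathrm{f}}_{a|x} + w_M\, M^{\mathrm{r}}_{a|x}, \\
\rho_{AB} &= (1-w_\rho)\, \rho^{\mathrm{f}}_{AB} + w_\rho\, \rho^{\mathrm{r}}_{AB},
\end{align*}
where $\{M^{\mathrm{f}}_{a|x}\}$ is $n$-simulable, $\rho^{\mathrm{f}}_{AB}$ has Schmidt number at most $n$, and $M^{\mathrm{r}}_{a|x}$, $\rho^{\mathrm{r}}_{AB}$ are arbitrary (valid) POVMs and states.

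First I would substitute both decompositions into $\sigma_{a|x}=\Tr_A\big(M_{a|x}\otimes\id\,[\rho_{AB}]\big)$ and expand by bilinearity into four assemblages $T^{i}_{a|x}$, one for each pairing of a measurement part with a state part. Each $T^{i}_{a|x}$ is itself a valid assemblage, since it is generated by a genuine POVM acting on a genuine state. The key observation is then Theorem~\ref{theorem:compat->prep}: in three of the four terms at least one ingredient is free (either the measurement is $n$-simulable, or the state has Schmidt number at most $n$), so each of these three assemblages is $n$-preparable; only the term built from $M^{\mathrm{r}}_{a|x}$ and $\rho^{\mathrm{r}}_{AB}$ carries no such guarantee.

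Next I would collect terms. The three $n$-preparable assemblages appear with coefficients $(1-w_M)(1-w_\rho)$, $(1-w_M)w_\rho$ and $w_M(1-w_\rho)$, which sum to $1-w_M w_\rho$, while the remaining term appears with coefficient $w_M w_\rho$. Normalising the first three by $1-w_M w_\rho$ exhibits their sum as a convex combination of $n$-preparable assemblages, hence again $n$-preparable by convexity of $P_n$; calling this $\sigma^{\mathrm{f}}_{a|x}$ and the last term $\sigma^{\mathrm{r}}_{a|x}$ yields
\[
\sigma_{a|x} = (1-w_M w_\rho)\,\sigma^{\mathrm{f}}_{a|x} + w_M w_\rho\,\sigma^{\mathrm{r}}_{a|x},
\]
a feasible decomposition for the weight, whence $\mathcal{W}_{P_n}(\sigma_{a|x}) \leq w_M w_\rho$ as claimed.

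The routine but essential bookkeeping, and the main place to be careful, is verifying that this really is an admissible weight decomposition: that the coefficients combine to $1-w_M w_\rho$ and $w_M w_\rho$, that $\sigma^{\mathrm{f}}_{a|x}$ and $\sigma^{\mathrm{r}}_{a|x}$ are each legitimate assemblages (their sums over $a$ being $x$-independent states), and above all that $P_n$ is convex so that merging the three free terms stays inside the free set. Convexity of $P_n$ is the one structural input beyond Theorem~\ref{theorem:compat->prep}, and it follows from the standard block-diagonal construction on Alice's side (direct-summing the two preparing states while sharing Bob's space), using that Schmidt number does not increase under mixing. Specialising to $n=1$, where $S_1$ is the jointly measurable sets, $E_1$ the separable states and $P_1$ the unsteerable assemblages, then recovers the quantitative relation among steering, measurement incompatibility and entanglement.
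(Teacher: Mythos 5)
Your proposal is correct and follows essentially the same route as the paper's own proof: expand $\sigma_{a|x}$ bilinearly over optimal weight decompositions of $M_{a|x}$ and $\rho_{AB}$, use Theorem~\ref{theorem:compat->prep} to certify that the three cross terms with at least one free ingredient are $n$-preparable, merge them by convexity of $P_n$ into a single free assemblage with coefficient $1-\mathcal{W}_{S_n}(M_{a|x})\mathcal{W}_{E_n}(\rho_{AB})$, and read off the bound from feasibility of the resulting decomposition. The only difference is cosmetic: you sketch an explicit argument for convexity of $P_n$, whereas the paper cites it from the literature.
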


We defer the proof of this theorem to the appendix.

\section{Simulating the correlations of high-dimensional entangled states using low-dimensional entanglement}

Strong demonstrations of the non-classical correlations of entangled states comes from the observation of Bell inequality violation, or from quantum steering. A long-standing topic of research is to understand the link between entanglement and these stronger forms of quantum correlations, see e.g. \cite{brunner2014bell,Augusiakreview}. In a seminal paper, Werner showed that certain entangled states, referred to as Werner states, cannot lead to Bell inequality violation \cite{Werner1989}. This result is based on the construction of an explicit local hidden variable model that reproduces exactly the correlations expected from any possible local projective measurements on the Werner state. Moreover, it turns out that the model construct by Werner is in fact of the form of an LHS model (as in Eq. \eqref{LHS}, see also Fig~\ref{fig:LHSmodel}), hence these Werner states can also never lead to quantum steering \cite{wiseman2007steering}. Note that these results can be extended to general POVMs using the model of Ref. \cite{barrett2002nonsequential}, which can be shown to be of LHS form \cite{quintino2015inequivalence}.

Here we revisit the above questions and propose a new perspective, based on the ideas developed in the previous sections of the paper. Instead of considering simulation models that involve only classical resources (classical shared randomness), we consider now simulation models assisted by entanglement, see Fig.~\ref{fig:EALHSmodel}. Of course, for this problem to be non-trivial, we must demand that the entanglement used in the simulation model is somehow weaker compared to the entanglement 
of the original state to be simulated. The dimensionality of entanglement (as given by the Schmidt number) provides a good measure for this problem. 

Consider an entangled state $\rho_{AB}$ of Schmidt number $d$ and arbitrary local measurements (possibly infinitely many) for both Alice and Bob. We now ask if we can simulate the resulting correlations with a model involving lower-dimensional entangled states (of Schmidt number $n<d$) and classical shared randomness. Of course, building such models can be challenging, as the model should reproduce exactly all correlations for any possible choice of local measurements. Nevertheless, we will see that using the ideas developed above, we can come up with such entanglement-assisted simulation models, and moreover prove their optimality. 

The main idea to construct these simulation models is to apply Theorem~\ref{theorem:compat->prep} to a result obtained recently in \cite{ioannou2022simulability}. The latter consist in obtaining bounds (in terms of noise robustness) for $n$-simulability for the (continuous) set of all projective measurements (in dimension $d$) under white noise. From Theorem~\ref{theorem:compat->prep}, we obtain an equivalent assemblage (with a continuous input $x$) that is $n$-preparable. The last point is to notice that this assemblage corresponds in fact to the one obtained from performing arbitrary local projective measurements on a shared entangled state $\rho_{AB}$, which takes the form of an isotropic state, i.e. 
\begin{equation} \label{iso}
    \rho(\eta'):= \eta' \ketbra{\Phi^+} + (1-\eta') \frac{\id}{d^2}
\end{equation}
where $\ket{\Phi^+}=\frac{1}{\sqrt{d}}\sum_i \ket{ii}$ and $0 \leq \eta' \leq 1$. Hence we obtain a simulation model using only entanglement with Schmidt number $n$ which reproduces exactly the correlations of some isotropic state of dimension $d\times d$. Interestingly, it appears that this isotropic state can have a Schmidt number that is larger than $n$.

More formally, consider the set of all projective measurements (PVMs) subject to white noise

\begin{equation}\label{noisyPVM}
    \mathcal{M}_{PVM}^\eta:=\bigg \{\eta M_{a|U} + (1-\eta)\frac{\id}{d} ~ : ~ U\in U(d) \bigg \} \,,
\end{equation}
where $U(d)$ is the unitary matrix group, $M_{a|U} = U\ketbra{a}U^\dagger$ and $\ket{a}$ denotes the computational basis.
It was shown in \cite{ioannou2022simulability} that the set $\mathcal{M}_{PVM}^\eta$ is $n$-simulable if $\eta \leq (d \sqrt{\frac{n+1}{d+1}}-1)(d-1)^{-1}$. Then by passing the noise from the measurements onto the state (see for example \cite{uola14}), we have that:
\begin{align}
&\text{Tr}_A \bigg (\bigg [\eta M_{a|U} + (1-\eta)\frac{\id}{d} \bigg ]\otimes \id ~ \ketbra{\Phi^+})\\
=&\text{Tr}_A \bigg (M_{a|U}\otimes \id ~ \rho (\eta) \bigg ).
\end{align}
Hence we reproduce exactly the assemblage expected from arbitrary projected measurement on an isotropic state with $\eta'= \eta$. Moreover, it is known that $\text{SN}(\rho(\eta)) \geq n+1 \quad \text{if} \quad \eta > \frac{dn-1}{d^2-1}$ \cite{terhal2000schmidt}. Hence for
\be
\frac{dn-1}{d^2-1} < \eta \leq \frac{d \sqrt{\frac{n+1}{d+1}}-1}{d-1}
\ee
the resulting assemblage can be reproduced via a simulation model involving only entangled states of Schmidt number $n$, despite the state possessing a Schmidt number of $n+1$. More generally, one can deduce a general bound on the noise parameter $\eta$ for guaranteeing $n$-preparability. We have illustrated these bounds this in Fig.~\ref{fig:my_label} for the case of dimension four. Remarkably, as the construction for PVMs in Ref. \cite{ioannou2022simulability} is optimal, the simulation models we obtain are also optimal (considering all possible PVMs). An interesting question is to understand how to extend these bounds considering all POVMs, but this is a challenging question, still open for the simplest case of $n=1$.

\begin{figure}
     \centering
     \begin{subfigure}[b]{0.5\textwidth}
         \centering
         \includegraphics[width=\textwidth]{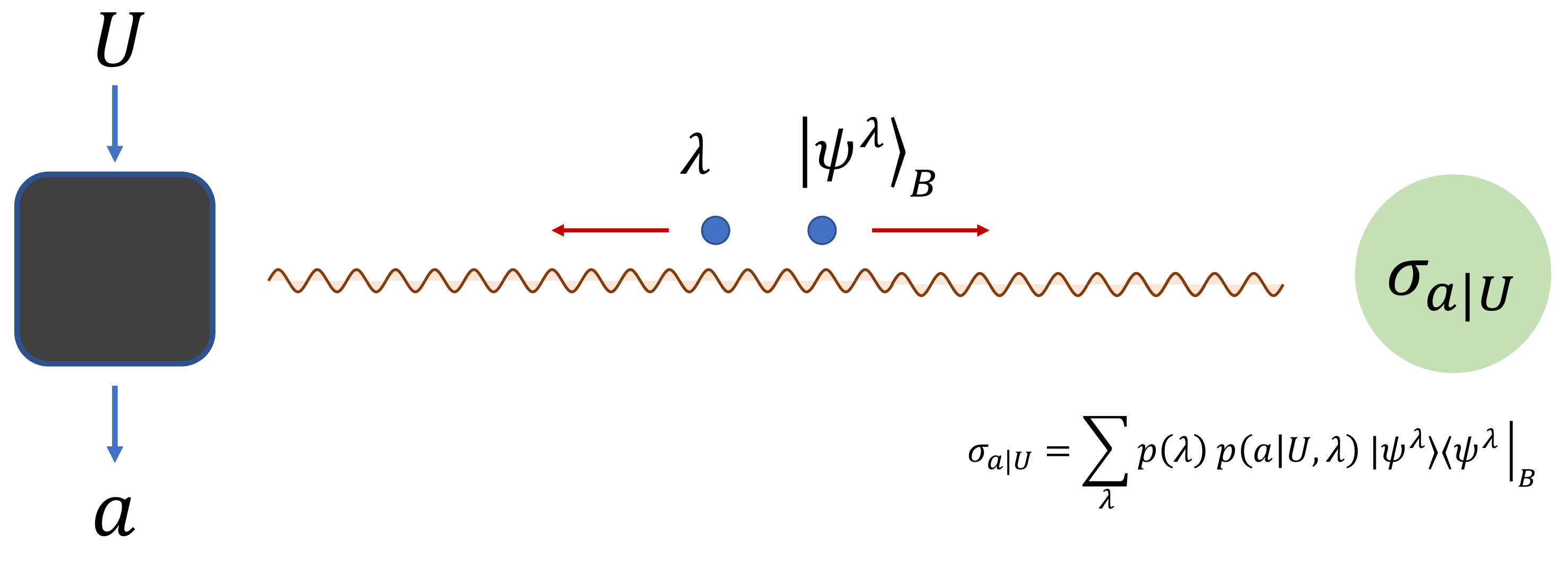}\vspace{-10pt}
         \caption{}
         \label{fig:LHSmodel}
     \end{subfigure}
     \hfill
     \begin{subfigure}[b]{0.5\textwidth}
         \centering
         \vspace{5pt}\includegraphics[width=\textwidth]{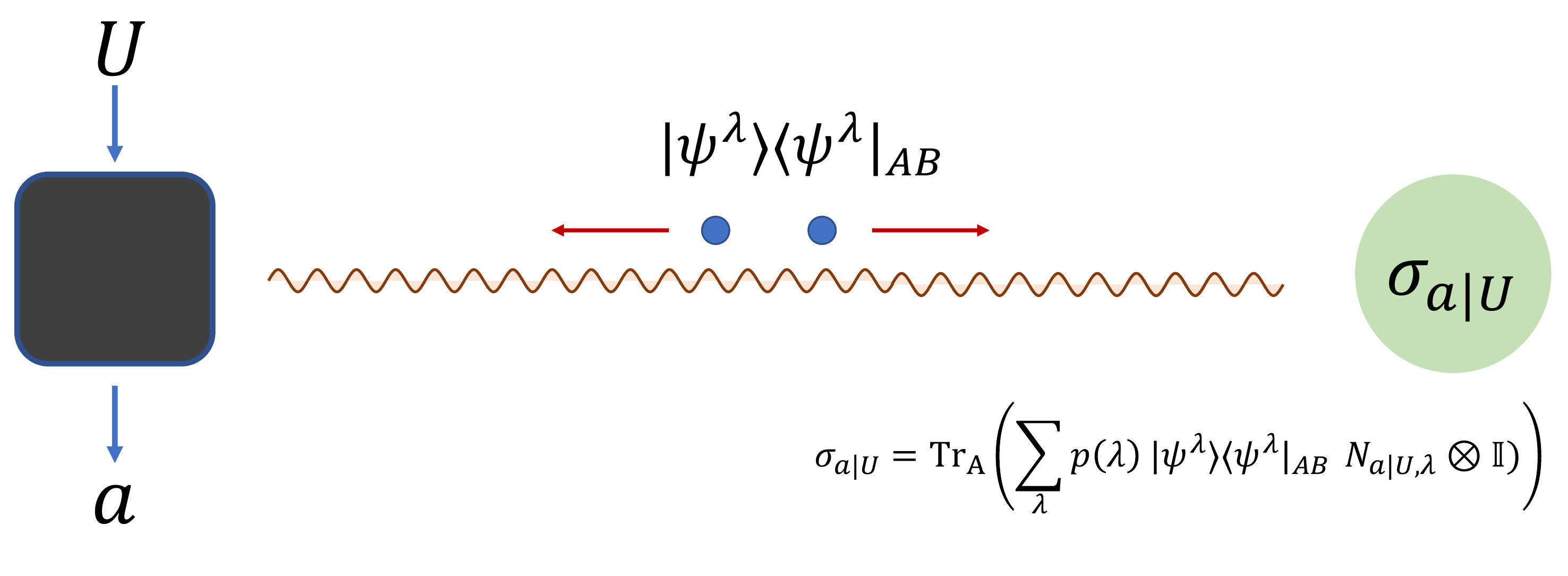} \vspace{-10pt}
         \caption{}
         \label{fig:EALHSmodel}
     \end{subfigure}
     \hfill
     \begin{subfigure}[b]{0.5\textwidth}
    \centering
    \hspace*{-5pt}\includegraphics[width=\textwidth]{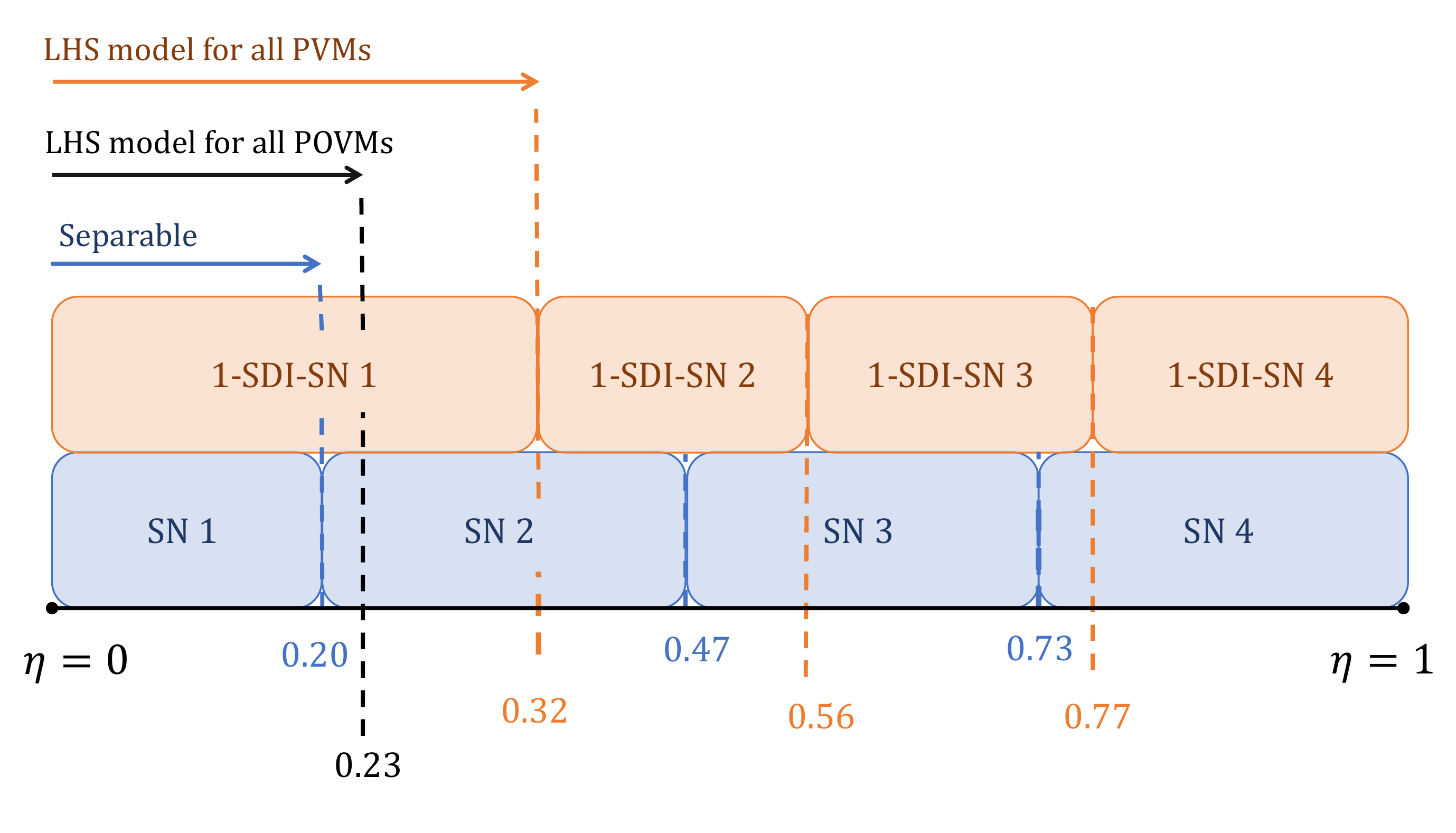} \vspace*{-15pt}
    \caption{}
    \label{fig:my_label}
     \end{subfigure}
     \caption{(a) Local hidden state model: one aims at simulating the assemblage with a separable state.\\
     (b) $n$-preparability: simulation of an assemblage using states with low-dimensional entanglement.\\
     (c) High-dimensional entanglement and steering properties of the isotropic state with local dimension $d=4$. The Schmidt number (SN) bounds can be found in \cite{terhal2000schmidt}, and in this work we translate known values on the $n$-simulability of all PVMs from \cite{ioannou2022simulability} to thresholds on states being such that they can only lead to n-preparable assemblages under all projective measurements. In the figure this is referred to as the one-sided semi-device independent Schmidt number (1-SDI-SN) under all PVMs. The bound for LHS models for all POVMs is from \cite{barrett2002nonsequential, almeida2007noise}.}
\end{figure}

\section{Criteria for $n$-simulability} 

The connections established in Section III also allow us to translate $n$-preparability inequalities into criteria for $n$-simulability. As an example, we take the set of $n$-preparability witnesses presented in Ref. \cite{designolle2021genuine}. Such witnesses state that for an $n$-preparable state assemblage $\{\sigma_{a|x}\}$ with $2$ inputs and $d$ outputs, one has that 
\begin{equation} \label{witness}
    \sum_{a,x}\text{Tr}[\sigma_{a|x}W_{a|x}]\leq N\Big(\frac{\sqrt{n}-1}{\sqrt{n}+1}+1\Big) \,.
\end{equation} 
where $N=1+1/\sqrt{d}$. The witness $W_{a|x}$ consists of a pair of mutually unbiased bases (MUBs for short) transposed in the computational basis, i.e., $W_{a|1}=|a\rangle\langle a|$ and $W_{b|2}=|\varphi_b\rangle\langle\varphi_b|^T$, where $\{|a\rangle\}$ is the computational basis and $\{|\varphi_b\rangle\}$ is an orthonormal basis with the property $|\langle a|\varphi_b\rangle|^2=1/d$ for each $a$ and $b$.

As an $n$-simulable set of measurements leads to an $n$-preparable state assemblage by Theorem \ref{theorem:compat->prep}, violation of a witness of this type in a steering scenario verifies that Alice's measurements are not $n$-simulable. As an example, we take a pair of MUBs subjected to white noise with visibility $\eta$ (similarly to Eq. \eqref{noisyPVM}) on Alice's side and the isotropic state \eqref{iso}. Plugging the resulting assemblage into the witness \eqref{witness}, we get that
\begin{align}
    \eta\leq\frac{(d+\sqrt{d}-1)\sqrt{n}-1}{(d-1)(\sqrt{n}+1)}.
\end{align}
Hence, for a visibility larger than this bound, a pair of MUBs is provably not $n$-simulable. We note that for the case $n=1$ we retrieve the known tight joint measurability threshold of two MUBs subjected to white noise \cite{Carmeli12,Haa15,Uola16}. Obtaining similar bounds for complete sets of MUBs, known for $n=1$ \cite{Designolle2019}, would be interesting.

\section{Quantum channels} 

An important superset of entanglement breaking channels is that of incompatibility breaking channels \cite{heinosaari2015incompatibility}, which are channels $\Lambda$ such that $\Lambda^*(M_{a|x})$ is jointly measurable for any $M_{a|x}$. Via channel-state duality these channels correspond respectively to separable and unsteerable states (where the direction of unsteerability corresponds to whether the channel is applied on the first or second system in the definition of channel-state duality). The connections between high-dimensional steering, $n$-simulability and $n$-PEB channels motivate the following definition:

\begin{definition} \label{def: PSB}
A channel $\Lambda$ is \textbf{$n$-partially incompatibility breaking} ($n$-PIB) if for any measurement assemblage $N_{a|x}$ the resulting measurement assemblage $\Lambda^*(N_{a|x})$ is $n$-simulable\footnote{We note that our definition here is different to the notion of $n$-incompatibility breaking channels defined in \cite{heinosaari2015incompatibility}, which denotes channels who break the incompatibility of any $n$ observables.}.
\end{definition}

Hence, just as $\Lambda \otimes \id$ maps all bipartite states to states with Schmidt number $n$ for $\Lambda$ a $n$-PEB channel, an $n$-PIB channel maps any measurement assemblage to an $n$-simulable one (in the Heisenberg picture).  We can also gain insight from considering the structure of $n$-PIB channels and their relation to $n$-PEB channels. Elaborating upon Def.~\ref{def: PSB}, for $\Lambda$ to be $n$-PIB we require that for all measurement assemblages $N_{a|x}$, there exists an $n$-PEB channel $\Omega$ and a set of measurements $M_{a|x}$ such that
\be 
\Lambda^*(N_{a|x})=\Omega^*(M_{a|x}). \label{eq:explicit-n-PIB}
\ee
Therefore, by simply taking $\Omega:=\Lambda$ and $M_{a|x}:=N_{a|x}$ in Eq.~\ref{eq:explicit-n-PIB}, we immediately arrive at the following result:

\begin{proposition}
Every $n$-PEB channel is $n$-PIB.
\end{proposition}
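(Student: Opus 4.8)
The plan is to unpack the two relevant definitions and observe that the claim follows immediately from the alternative characterisation of $n$-simulability stated earlier in the excerpt. Recall that a measurement assemblage $M_{a|x}$ is $n$-simulable if and only if there exists an $n$-PEB channel $\Omega$ and a measurement assemblage $M'_{a|x}$ with $M_{a|x} = \Omega^*(M'_{a|x})$. To show that an $n$-PEB channel $\Lambda$ is $n$-PIB, I must verify that for every measurement assemblage $N_{a|x}$, the pulled-back assemblage $\Lambda^*(N_{a|x})$ is $n$-simulable.

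First I would fix an arbitrary measurement assemblage $N_{a|x}$, and note that $\Lambda^*(N_{a|x})$ is itself a valid measurement assemblage: since $\Lambda^*$ is a unital completely positive map and each $N_{a|x} \geq 0$ with $\sum_a N_{a|x} = \id$, the images satisfy $\Lambda^*(N_{a|x}) \geq 0$ and $\sum_a \Lambda^*(N_{a|x}) = \Lambda^*(\id) = \id$. The goal is then to exhibit an $n$-PEB channel and a set of measurements whose Heisenberg-picture composition reproduces $\Lambda^*(N_{a|x})$.

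The key step is simply to take the witnessing $n$-PEB channel to be $\Lambda$ itself and the witnessing measurements to be $N_{a|x}$ themselves. With $\Omega := \Lambda$ and $M'_{a|x} := N_{a|x}$ one has $\Lambda^*(N_{a|x}) = \Omega^*(M'_{a|x})$ by construction, and $\Omega = \Lambda$ is $n$-PEB by hypothesis. By the alternative formulation of $n$-simulability, this certifies that $\Lambda^*(N_{a|x})$ is $n$-simulable. Since the argument holds for every choice of $N_{a|x}$, the channel $\Lambda$ maps every measurement assemblage to an $n$-simulable one, which is precisely the definition of an $n$-PIB channel.

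There is essentially no obstacle here: the statement is a tautology once one recognises that $n$-simulability of a measurement assemblage is \emph{defined} via pullback by an $n$-PEB channel, so any assemblage already obtained as the pullback by an $n$-PEB channel is automatically $n$-simulable. The only point worth flagging explicitly is the verification in the previous paragraph that $\Lambda^*(N_{a|x})$ is a genuine assemblage, which relies on $\Lambda^*$ being unital and completely positive; this is exactly the setup of Eq.~\eqref{eq:explicit-n-PIB}, whence the result is immediate.
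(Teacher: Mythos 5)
Your proof is correct and follows exactly the paper's argument: the paper likewise unpacks the definition of $n$-PIB via Eq.~\eqref{eq:explicit-n-PIB} and takes $\Omega:=\Lambda$ and $M_{a|x}:=N_{a|x}$. The extra check that $\Lambda^*(N_{a|x})$ is a valid assemblage is a harmless addition the paper leaves implicit.
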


It is illuminating to consider the corresponding Choi states. For $n$-PEB channels, the Choi states are exactly the states with Schmidt number $n$. \cite{chruscinski2006partially}. For $n$-PIB channels, we have the following result:
\begin{theorem} \label{thm:pib = sdi-sn}
$\Lambda$ is $n$-PIB if and only if $\rho_\Lambda$ only leads to $n$-preparable assemblages.
\end{theorem}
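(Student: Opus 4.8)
The plan is to reduce the statement to the equivalence already established in Theorem~\ref{thm: n-sim = n-prep}, by making channel-state duality explicit for the Choi state $\rho_\Lambda$. Fixing the convention $\rho_\Lambda = (\Lambda \otimes \id)(\ketbra{\Phi^+})$, with Alice measuring the system on which $\Lambda$ acts (so that the steering direction matches the one intended in Definition~\ref{def: PSB}), the first step is a direct computation of the assemblage obtained from $\rho_\Lambda$ under an arbitrary measurement $M_{a|x}$. Moving $\Lambda$ onto the measurement through its adjoint and using the identity $\Tr_A(M \otimes \id~\ketbra{\Phi^+}) = \frac{1}{d} M^T$ already exploited in the proof of the preceding Proposition, I obtain
\be
\sigma_{a|x} = \Tr_A\big( M_{a|x} \otimes \id~[\rho_\Lambda] \big) = \frac{1}{d}\big(\Lambda^*(M_{a|x})\big)^T .
\ee

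Next I would observe that, since $\Lambda$ is trace-preserving, the reduced state is $\rho_B = \Tr_A(\rho_\Lambda) = \id/d$, which is full rank (and indeed $\sum_a \sigma_{a|x} = \frac{1}{d}(\Lambda^*(\id))^T = \id/d$ by unitality of $\Lambda^*$, confirming consistency). Hence Theorem~\ref{thm: n-sim = n-prep} applies with $\rho_B = \id/d$. In this case $\rho_B^{-1/2}\sigma_{a|x}\rho_B^{-1/2} = d\,\sigma_{a|x} = (\Lambda^*(M_{a|x}))^T$, so the theorem yields that $\sigma_{a|x}$ is $n$-preparable if and only if $(\Lambda^*(M_{a|x}))^T$ is $n$-simulable. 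Invoking the transpose-invariance of $n$-simulability established within the proof of Theorem~\ref{thm: n-sim = n-prep}, this is in turn equivalent to $\Lambda^*(M_{a|x})$ being $n$-simulable. Thus, for every measurement $M_{a|x}$,
\be
\sigma_{a|x}\ \text{is}\ n\text{-preparable} \iff \Lambda^*(M_{a|x})\ \text{is}\ n\text{-simulable}.
\ee

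The final step is to quantify over all measurements. The left-hand side of the theorem, that $\rho_\Lambda$ only leads to $n$-preparable assemblages, is precisely the statement that the first condition holds for every $M_{a|x}$; the right-hand side, that $\Lambda$ is $n$-PIB, is by Definition~\ref{def: PSB} precisely the statement that $\Lambda^*(N_{a|x})$ is $n$-simulable for every measurement assemblage $N_{a|x}$. As $M_{a|x}$ and $N_{a|x}$ range over the same set of all measurement assemblages on the output space of $\Lambda$, the two conditions coincide, which settles both directions at once.

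I expect the main obstacle to be bookkeeping rather than anything conceptual: one must fix the channel-state duality convention consistently (which tensor factor carries $\Lambda$, hence which party steers) and track the transpose carefully, so that the universal quantifier over $M_{a|x}$ on the steering side genuinely matches the universal quantifier over $N_{a|x}$ in the definition of $n$-PIB. A secondary point worth verifying is that the input and output dimensions of $\Lambda$ are compatible with $\ketbra{\Phi^+}$ and with $\Lambda^*$ acting on measurements of the output space; the full-rank hypothesis of Theorem~\ref{thm: n-sim = n-prep} is automatic here since $\rho_B=\id/d$.
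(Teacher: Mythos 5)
Your proof is correct and follows essentially the same route as the paper's: both reduce the statement to Theorem~\ref{thm: n-sim = n-prep} via channel-state duality, carrying the transpose through with the transpose-invariance of $n$-simulability established in that theorem's proof. The only cosmetic difference is that you specialize to the Choi state with maximally mixed marginal and compute the assemblage explicitly as $\frac{1}{d}(\Lambda^*(M_{a|x}))^T$, whereas the paper invokes the generalised duality with an arbitrary full-rank marginal $\sigma=\Tr_A(\rho_\Lambda)$.
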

\begin{proof}
Let $\sigma=\text{Tr}_A(\rho_\Lambda)$ fix the channel-state correspondence.
Suppose $\Lambda$ is $n$-PIB, that is, for all measurements $N_{a|x}$, we have that $\Lambda^*(N_{a|x})$ is $n$-simulable. By Theorem \ref{thm: n-sim = n-prep}, this is equivalent to $\sigma^{\frac{1}{2}}\Lambda^*(N_{a|x})^T\sigma^{\frac{1}{2}}$ being $n$-preparable for all $N_{a|x}$. Via channel-state duality, this is equivalent to 
\be
\text{Tr}_A(N_{a|x} \otimes \id \rho)
\ee
being $n$-preparable for all $N_{a|x}$.
\end{proof}

The result of the above Theorem is put into context of other similar type connections between a channel and its Choi state in Table~\ref{tab:cs-duality}. We note that our results on bounding entanglement assisted simulation models for the noisy singlet state translate directly into bounds on the identity channel under depolarising noise for being $n$-PIB on the resrticted class of projective measurements. This also shows that when only projective measurements are considered, there are channels that are $n$-PEB without being $n$-PIB.

\setlength{\tabcolsep}{5pt}
\renewcommand{\arraystretch}{2}
\begin{table}[]
    \centering
    \begin{tabular}{|c|c|c|}\hline
    Channel & State & Reference \\\hline
    Entanglement breaking & Separable & \cite{horodecki2003entanglement} \\
    Incompatibility breaking & Unsteerable & \cite{heinosaari2015incompatibility, kiukas2017continuous}\\
         $n$-PEB & SN $n$ & \cite{terhal2000schmidt, chruscinski2006partially} \\
         $n$-PIB & SDI-SN $n$ & Theorem \ref{thm:pib = sdi-sn}. \\\hline
    \end{tabular}
    \caption{Connections between channels and their Choi states. Our work naturally extends this picture by generalising both incompatibility breaking channels and unsteerable states in terms of dimension, and proving that they directly correspond to each other through generalised channel state-duality.}
    \label{tab:cs-duality}
\end{table}

\section{Conclusions}

We have uncovered deep connections between high-dimensional versions of quantum steering, measurement incompatibility, and quantum channels, and demonstrated how a rich transfer of information is possible between these areas. In particular, we showed that the concept of $n$-simulability for sets of POVMs is equivalent to $n$-preparability for state assemblages in steering. This generalises the well-known connection between steering and joint measurability, which simply corresponds here to the case $n=1$.

We identified the resources required for observing GHDS, in particular that both high-dimensional measurements and high-dimensional entanglement are necessary. In the light of these results, we conclude that the experiment of Ref.~\cite{designolle2021genuine} also demonstrates measurements in pairs of MUBs that are highly incompatible, in the sense that are they not $14$-simulable.

Another direction is the idea of quantifying the degree of steering of an entangled state via a dimension. We obtained optimal models for isotropic entangled state, considering all projective measurements. This can be seen as a generalisation of the well-known type of local (hidden state) models by Werner, now allowing for low-dimensional entanglement as a resource. In turn, this leads to a characterisation of channels that map any set of projective measurements into $n$-simulable ones.

There are many exciting notions to explore that would extend this research direction. It would be useful to have better bounds on both $n$-preparability and $n$-simulability, and our work demonstrates that any progress here can be readily applied to both notions, providing a practical bridge between the two scenarios. Of particular interest would be to find bounds on the isotropic state being of SDI-SN $n$ under all POVMs, which would directly translate into the $n$-simulability of all POVMs. This follows analogous lines to the $n=1$ case (finding LHS bounds under projective/POVM measurements) \cite{barrett2002nonsequential}.

A natural further question would be to explore these questions in the context of nonlocality \cite{brunner2014bell}, which can be thought of as a fully-device independent (FDI) regime. Analogously to the steering case, one could define a behaviour $p(a,b|x,y)$ to be $n$-preparable if it could have arisen from a shared state of Schmidt number at most $n$, and define a state to have fully-device independent Schmidt number $n$ (FDI-SN $n$) if it can only lead to $n$-preparable behaviours. This is related to \cite{brunner2008testing}, where the authors introduce the concept of dimension witnesses to lower bound the dimension of the underlying state. One can quickly see in this scenario that if either of the two parties use $n$-simulable measurements, then the resulting behaviour will be $n$-preparable. Similarly, uncharacterised measurements on an $n$-preparable assemblage can only result in an $n$-preparable behaviour. However, it is less clear how one could characterise the corresponding channels whose Choi states have FDI-SN $n$. In the steering case we were able to exploit and generalise known connections with measurement incompatibility, but it seems that new tools may be needed to attack this problem in the fully device independent regime.

\textit{Acknowledgments.---} We acknowledge financial support from the Swiss National Science Foundation (projects 192244, Ambizione PZ00P2-202179, and NCCR SwissMAP). BDMJ acknowledges support
from UK EPSRC (EP/SO23607/1). T.C. would like to acknowledge the funding Deutsche Forschungsgemeinschaft (DFG, German Research Foundation) under Germany's
Excellence Strategy EXC-2123 QuantumFrontiers 390837967, as well as the support of the Quantum Valley Lower Saxony and the DFG through SFB 1227 (DQ-mat).

\bibliographystyle{unsrt}
\bibliography{references}

\appendix

\section{Proof of Proposition \ref{prop:n-PEB equivalence}}

\begin{proposition} \label{prop:n-PEB equivalence} The following are equivalent:
\begin{enumerate}[(i)]
    \item $\Lambda$ is $n$-PEB.
    \item SN$(\rho_\Lambda) \leq n$ for any choice of the marginal state in the generalised channel-state duality (see \cite{kiukas2017continuous}).
    \item There exists a Kraus decomposition of $\Lambda (Y) = \sum_\lambda K_\lambda Y K_\lambda^\dagger $ such that rank$(K_\lambda) \leq n$ for all $\lambda$.
\end{enumerate}
\end{proposition}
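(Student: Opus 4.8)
The plan is to establish the cycle of implications (iii)$\,\Rightarrow\,$(i)$\,\Rightarrow\,$(ii)$\,\Rightarrow\,$(iii). The equivalence (i)$\,\Leftrightarrow\,$(iii) is essentially the Kraus-rank characterisation of $n$-partially entanglement breaking channels, while (ii) is bridged to these through channel-state duality. The central tool throughout is the standard vectorisation isomorphism sending a linear operator $K$ to the (unnormalised) vector $(K \otimes \id)\ket{\Phi^+}$, under which $\text{rank}(K)$ equals the Schmidt rank of the associated vector.

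First I would treat (iii)$\,\Rightarrow\,$(i), which is a direct computation. Given $\Lambda(Y) = \sum_\lambda K_\lambda Y K_\lambda^\dagger$ with $\text{rank}(K_\lambda) \leq n$, for any bipartite input $\rho$ we have $(\Lambda \otimes \id)(\rho) = \sum_\lambda (K_\lambda \otimes \id)\,\rho\,(K_\lambda \otimes \id)^\dagger$. Decomposing $\rho = \sum_j q_j \ketbra{\chi_j}{\chi_j}$ into pure states, each term produces vectors $(K_\lambda \otimes \id)\ket{\chi_j}$ lying in $\text{Im}(K_\lambda) \otimes \cH_B$, hence of Schmidt rank at most $\text{rank}(K_\lambda) \leq n$. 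Thus $(\Lambda \otimes \id)(\rho)$ is a mixture of Schmidt-rank-$\leq n$ pure states, so its Schmidt number is at most $n$ and $\Lambda$ is $n$-PEB. The implication (i)$\,\Rightarrow\,$(ii) is then almost immediate: for any choice of full-rank marginal, the corresponding Choi state is of the form $\rho_\Lambda = (\Lambda \otimes \id)(\ketbra{\psi}{\psi})$ for a suitable purification $\ket{\psi}$, so applying the $n$-PEB property to this particular pure input gives $\text{SN}(\rho_\Lambda) \leq n$ directly, uniformly over the marginal choice.

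The main obstacle is the reconstruction (ii)$\,\Rightarrow\,$(iii), which I expect to demand the most care. Here I would fix the maximally mixed marginal so that $\rho_\Lambda = (\Lambda \otimes \id)(\ketbra{\Phi^+}{\Phi^+})$, and use $\text{SN}(\rho_\Lambda) \leq n$ to write $\rho_\Lambda = \sum_\lambda \ketbra{\phi_\lambda}{\phi_\lambda}$ (absorbing the weights into the vectors) with each $\ket{\phi_\lambda}$ of Schmidt rank at most $n$. Inverting the vectorisation map defines operators $K_\lambda$ via $(K_\lambda \otimes \id)\ket{\Phi^+} = \ket{\phi_\lambda}$, for which $\text{rank}(K_\lambda)$ equals the Schmidt rank of $\ket{\phi_\lambda}$ and is therefore at most $n$. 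It then remains to confirm that $\Lambda(Y) = \sum_\lambda K_\lambda Y K_\lambda^\dagger$ is a genuine Kraus decomposition of $\Lambda$, in particular trace-preserving; this follows because the map so defined has Choi operator exactly $\rho_\Lambda$, and the Choi correspondence is a bijection between completely positive maps and their Choi operators, forcing the reconstructed map to coincide with $\Lambda$.

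To close the loop with the quantifier over marginals in (ii), I would finally note that passing between two full-rank marginals $\sigma, \sigma'$ amounts to the local invertible filtering $\sigma'^{1/2}\sigma^{-1/2}$ on Bob's side, which leaves the Schmidt number unchanged; hence (ii) is in fact independent of the marginal choice, so proving the reconstruction for the maximally mixed case suffices. I anticipate the only genuinely delicate point to be the bookkeeping in (ii)$\,\Rightarrow\,$(iii) — matching normalisations in the vectorisation map and verifying trace preservation of the reconstructed channel — with every other step reducing to the elementary image-dimension bound on Schmidt rank.
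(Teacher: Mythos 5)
Your proof is correct, and its engine is the same as the paper's: a Schmidt-rank-$\leq n$ pure-state decomposition of the Choi state is converted, via the vectorisation isomorphism $\ket{\phi_\lambda}=(K_\lambda\otimes\id)\ket{\Phi^+}$ with $\mathrm{SR}(\ket{\phi_\lambda})=\mathrm{rank}(K_\lambda)$, into a Kraus decomposition with all ranks at most $n$. The differences are in the packaging. For (ii)$\Rightarrow$(iii) the paper works with an arbitrary full-rank marginal $\sigma$ from the outset, writes each pure component as $\sum_i\ket{i}F_\lambda\ket{i}$, and pushes the computation through to obtain explicit Kraus operators $K_\lambda=\sqrt{p_\lambda}\,F_\lambda^T\sigma^{-1/2}$, checking trace preservation by hand; you instead specialise to the maximally mixed marginal, identify the reconstructed map with $\Lambda$ by injectivity of the Choi correspondence (which cleanly inherits trace preservation), and dispose of the marginal dependence by noting that switching marginals is an invertible local filter, which cannot change the Schmidt number. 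Your route avoids the $\sigma^{-1/2}$ bookkeeping at the cost of the extra filtering lemma; the paper's yields the Kraus operators for the general duality explicitly, which is what it actually needs elsewhere (in Theorem 2 the relevant channel is $\Lambda_{\rho_{AB}}$ with marginal $\rho_B$, not $\id/d$). You also prove (iii)$\Rightarrow$(i) from scratch via the image-dimension bound on $(K_\lambda\otimes\id)\ket{\chi_j}$, where the paper simply cites the known equivalence (i)$\Leftrightarrow$(iii); your argument is the standard one and is fine. One small remark: in (i)$\Rightarrow$(ii) you restrict to full-rank marginals, whereas the statement quantifies over any marginal; this costs nothing, since the Choi state is always $(\Lambda\otimes\id)$ applied to some pure state, but it is worth phrasing without the full-rank caveat.
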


\begin{proof}The implication $(i) \Longrightarrow (ii)$ is immediate, and the equivalence $(i) \Longleftrightarrow (iii)$ is proven in \cite{chruscinski2006partially}.

To show $(ii) \Longrightarrow (iii)$, first recall that as highlighted in \cite{chruscinski2006partially}, any bipartite pure state $\ket{\psi} = \sum_{ij} \psi_{ij} \ket{i}\ket{j}$ can be written in the form $\ket{\psi}=\sum_{i} \ket{i} F\ket{i}$, where $\bra{j}F\ket{i} = \psi_{ij}$ and the Schmidt rank SR$(\ket{\psi})=\text{rank}(F)$. This generalises to mixed states as
\begin{align}
\rho &= \sum p_\lambda \ketbra{\psi_k}{\psi_k} \\
&= \sum_\lambda p_\lambda \sum_{ij} \ketbra{i}{j} \otimes F_\lambda \ketbra{i}{j} F_\lambda^\dagger
\end{align}
where $\text{Tr}(F_\lambda F_\lambda^\dagger)=1$ for all $\lambda$. One can also take $\text{rank}(F_\lambda) \leq \text{SN}(\rho)=n \quad  \forall \lambda$. (again see \cite{chruscinski2006partially}).
Now consider 
\begin{align}
 &\Lambda_\rho(Y) = \text{Tr}_B ( \mathbbm{1}\otimes (\sigma^{\tiny -\frac{1}{2}} Y \sigma^{\tiny -\frac{1}{2}} )^T \rho) \\
 &= \sum_{ij\lambda} p_\lambda \text{Tr}_B \bigg ( \mathbbm{1}\otimes (\sigma^{\tiny -\frac{1}{2}} Y \sigma^{\tiny -\frac{1}{2}} )^T  \ketbra{i}{j} \otimes F_\lambda \ketbra{i}{j} F_\lambda^\dagger \bigg ) \\
 &= \sum_{ij\lambda} p_\lambda \text{Tr}_B \bigg ( \mathbbm{1}\otimes F_\lambda^\dagger (\sigma^{\tiny -\frac{1}{2}} Y \sigma^{\tiny -\frac{1}{2}} )^T F_\lambda  \ketbra{i}{j} \otimes  \ketbra{i}{j} \bigg ) \\
 &= \sum_{\lambda} p_\lambda \bigg (F_\lambda^\dagger (\sigma^{\tiny -\frac{1}{2}} Y \sigma^{\tiny -\frac{1}{2}} )^T F_\lambda \bigg )^T 
\end{align}
where we used the well known fact $\sum_{ij}\text{Tr}_B (X\otimes Y \ketbra{i}{j} \otimes \ketbra{i}{j}) = XY^T$. Using properties of the transpose, we can now write
\begin{align}
  \Lambda_\rho(Y) &= \sum_{\lambda} p_\lambda \bigg (F_\lambda^T \sigma^{\tiny -\frac{1}{2}} Y \sigma^{\tiny -\frac{1}{2}}  (F_\lambda^\dagger)^T \bigg ) \\
  &\equiv \sum_\lambda K_\lambda ~ Y ~ K_\lambda^\dagger
\end{align}
where we defined $K_\lambda := \sqrt{p_\lambda} F_\lambda^T \sigma^{-\frac{1}{2}}$. As the channel $\Lambda_\rho$ is trace-preserving, we must have that 
\be
\text{Tr} \bigg (\sum_\lambda K_\lambda^\dagger K_\lambda Y \bigg ) = \text{Tr} (Y) \qquad \forall ~ Y
\ee
which implies that $\sum_\lambda K_\lambda^\dagger K_\lambda = \id$, and hence $K_\lambda$ define a valid set of Kraus operators for the channel.
Finally, observe that
\begin{align}
\text{rank}(K_\lambda) &= \min\{ \text{rank}(F_\lambda^T), \text{rank}( \sigma^{-\frac{1}{2}})\} \\
&= \text{rank}(F_\lambda^T) = \text{rank}(F_\lambda) \leq n
\end{align}
for all $\lambda$, which completes the proof.
\end{proof}

\section{Proof of Theorem \ref{thm:weight}}

Recall the definition of the following sets:
\begin{itemize}
    \item $E_n$: the set of states with Schmidt number at most $n$.
    \item $S_n$: the set of of $n$-simulable measurement assemblages.
    \item $P_n$: the set of $n$-preparable steering assemblages.
\end{itemize}

We now prove the following theorem, which serves to quantify the necessity of both high-dimensional incompatibility and entanglement for genuine high-dimensional steering.

\weighttheorem*

\begin{proof}
We first note that the sets $E_n$, $S_n$ and $P_n$ are all convex, which can be readily verified (see for example Refs. \cite{terhal2000schmidt,uola2019quantifying}).

\begin{align}
    \sigma_{a|x}&=\text{Tr}_A[(M_{a|x}\otimes\id)\rho_{AB}]\\
    &=(1-\mathcal{W}_{S_n}(M_{a|x}))\mathcal{W}_{E_n}(\rho_{AB})]\tau^{(1)}_{a|x}\nonumber\\
        &+\mathcal{W}_{S_n}(M_{a|x})(1-\mathcal{W}_{E_n}(\rho_{AB}))]\tau^{(2)}_{a|x}\nonumber\\
        &+(1-\mathcal{W}_{S_n}(M_{a|x}))(1-\mathcal{W}_{E_n}(\rho_{AB}))]\tau^{(3)}_{a|x}\nonumber\\
    &+\mathcal{W}_{S_n}(M_{a|x})\mathcal{W}_{E_n}(\rho_{AB})\kappa_{a|x},
\end{align}
where $\tau^{(i)}_{a|x}$ is an $n$-preparable state assemblage for $i=1,2,3$ and $\kappa_{a|x}$ is an arbitrary assemblage. The fact that $\tau^{(i)}_{a|x}$ are all $n$-preparable follows directly from Theorem \ref{theorem:compat->prep}. Now note that the coefficients of the first three terms sum to $1-\mathcal{W}_{S_n}(M_{a|x})\mathcal{W}_{E_n}(\rho_{AB})$, hence we can write the sum of these first three terms as
\[
(1-\mathcal{W}_{S_n}(M_{a|x})\mathcal{W}_{E_n}(\rho_{AB}))\tau_{a|x}
\]
where $\tau_{a|x}$ is a convex combination of $\tau^{(1)}_{a|x}$, $\tau^{(2)}_{a|x}$ and $\tau^{(3)}_{a|x}$, and hence is itself $n$-preparable. Putting this together, we have that
\begin{align}
  \sigma_{a|x}&=(1-\mathcal{W}_{S_n}(M_{a|x})\mathcal{W}_{E_n}(\rho_{AB}))\tau_{a|x}\\
    &+\mathcal{W}_{S_n}(M_{a|x})\mathcal{W}_{E_n}(\rho_{AB})\kappa_{a|x},
\end{align}
so we see that $\mathcal{W}_{S_n}(M_{a|x})\mathcal{W}_{E_n}$ is a feasible solution for the convex weight of $\sigma_{a|x}$ with respect to $P_n$. As the convex weight is a minimisation the inequality
\[
\mathcal{W}_{P_n}(\sigma_{a|x}) \leq \mathcal{W}_{S_n}(M_{a|x}) \mathcal{W}_{E_n}(\rho_{AB}).
\]
follows.
\end{proof}

\end{document}